\newcommand{\ceiling}[1]{\left\lceil{#1}\right\rceil}
\newcommand{\kuframework}[1]{$\mathbf{k^2U}$}
 \def\myendproof{{\ \vbox{\hrule\hbox{%
   \vrule height1.3ex\hskip0.8ex\vrule}\hrule }}\par}
\title{Federated Scheduling Admits No Constant Speedup Factors for Constrained-Deadline DAG Task Systems\thanks{This paper has been supported by DFG, as
    part of the Collaborative Research Center SFB876
    (http://sfb876.tu-dortmund.de/).}}
\author{
    Jian-Jia Chen}
\institute{
    Department of Informatics,
    TU Dortmund University, Germany\\
}
\begin{document}

\maketitle

\vspace{-0.2in}
\begin{abstract}
  In the federated scheduling approaches in multiprocessor systems, a
  task either 1) is restricted to execute sequentially on a single
  processor or 2) has
  exclusive access to the assigned processors. There have been several
  positive results to conduct good federated scheduling policies,
  which have constant speedup factors with respect to any optimal
  federated scheduling algorithm.  This paper answers an open question: ``For
  constrained-deadline task systems with directed acyclic graph (DAG)
  dependency structures, do federated scheduling policies have a
  constant speedup factor with respect to any optimal scheduling
  algorithm?''  The answer is ``No!''  This paper presents an
  example, which demonstrates that any federated scheduling algorithm
  has a speedup factor of at least $\Omega(\min\{M, N\})$ with respect to
  any optimal scheduling algorithm, where $N$ is the number of tasks
  and $M$ is the number of processors.
  \end{abstract}

\section{Introduction}

The sporadic task model has been widely adopted in real-time systems.
In the sporadic task model, a task $\tau_i$ is characterized by its
relative deadline $D_i$, its minimum inter-arrival time $T_i$. A
sporadic task is an infinite sequence of task instances, referred to
as \emph{jobs}, where two consecutive jobs of a task should arrive no
shorter than the minimum inter-arrival time separation.
A sporadic task system $\tau$ is called an implicit-deadline
system if $D_i = T_i$ holds for each $\tau_i$. A sporadic task system
$\tau$ is called a constrained-deadline system if $D_i \leq T_i$
holds for each $\tau_i$.  Otherwise, such a sporadic task system
$\tau$ is an arbitrary-deadline system.

Traditionally, each task $\tau_i$ is also associated with its
worst-case execution time (WCET) $C_i$. In uniprocessor platforms, in
the literature, since the processor only executes one job at one
time point, there is no need to express potential parallel
execution paths.
Multiprocessor platforms allow \emph{inter-task parallelism}, which enables the
capability to execute sequential programs concurrently, and
\emph{intra-task parallelism}, which allows a parallelized task to be
executed in parallel at the same time. The parallelism of a job of a
task can be represented by a \emph{directed acyclic graph (DAG)}.
That is, the DAG structure defines the precedence constraints
of the subtasks. In other words, the execution of task $\tau_i$ 
can be divided into subtasks and the precedence constraints of these subtasks are defined by a DAG structure.
 
To handle a set of DAG tasks on multiprocessor platforms, the recent
studies by Li et al. \cite{DBLP:conf/ecrts/LiCALGS14} and Baruah
\cite{DBLP:conf/date/Baruah15,DBLP:conf/ipps/Baruah15,DBLP:conf/emsoft/Baruah15}
suggest to use federated scheduling.  In the federated scheduling in
multiprocessor systems, a task 1) either is restricted 
to execute sequentially on a single processor or 2) has exclusive access to the assigned processors.
The federated scheduling strategy was originally proposed in
\cite{DBLP:conf/ecrts/LiCALGS14} for implicit-deadline task
systems. Baruah
\cite{DBLP:conf/date/Baruah15,DBLP:conf/ipps/Baruah15,DBLP:conf/emsoft/Baruah15}
adopted the concept of federated scheduling for constrained-deadline
and arbitrary-deadline task systems. 

A scheduling algorithm generates a schedule for the task system $\tau$ to define when and how the
jobs are executed on the platform. A schedule is \emph{feasible}
if no job misses its  deadline and all the scheduling constraints are
respected (i.e., precedence constraints, minimum inter-arrival time between two
consecutive jobs of a task, etc.). An optimal scheduling algorithm is
defined as follows: \emph{If there exist a feasible schedule, an
  optimal scheduling algorithm produces one of them.} 
Similarly, an optimal \underline{federated} scheduling algorithm is
defined as follows: \emph{If there exist feasible \underline{federated} schedules, an
  optimal \underline{federated} scheduling algorithm produces one of them.}

The federated scheduling strategies are not optimal scheduling
strategies. That is, there exist task sets which can be feasibly
scheduled to meet their deadlines, but the federated scheduling
strategies lead to deadline misses while scheduling those task sets.
One widely-adopted theoretical measure to quantify the approximation
made in such \emph{non-optimal} scheduling strategies is the
\emph{speedup factors}, defined as follows:
\begin{definition}
  A scheduling algorithm ${\cal A}$ is said to have a speedup factor $s$ with respect to a scheduling algorithm ${\cal B}$ if the following condition always holds:
  \begin{itemize}
  \item For any task system $\tau$ that can be feasibly scheduled by
    the scheduling algorithm ${\cal B}$, the schedule derived from the
    scheduling algorithm ${\cal A}$ is feasible by speeding up (each
    of the processors) to $s$ times as fast as in the original
    platform (speed). \myendproof
  \end{itemize}
\end{definition}

The quantitive measure of speedup factors is always related to the
reference scheduling algorithm ${\cal B}$. The speedup factor with respect to
any optimal scheduling algorithm provides an \emph{absolute} measure to
evaluate the theoretical gap of the scheduling algorithm ${\cal
  A}$. However, if ${\cal B}$ is only a sub-optimal scheduling
algorithm, the speedup factor with respect to ${\cal B}$ provides only
a \emph{relative} measure. Therefore, if the reference algorithm ${\cal B}$
is very far from any optimal scheduling algorithm, the quantitive
speedup factors with respect to ${\cal B}$ may be misleading. 

The existing results of speedup factors for federated scheduling on $M$
identical processors can be summarized as follows:
\begin{itemize}
\item The speedup factor of the federated scheduling algorithm in
  \cite{DBLP:conf/ecrts/LiCALGS14} for implicit-deadline task systems
  in identical multiprocessor platforms is $2$ with respect to
  \emph{any optimal scheduling algorithm}.\footnote{The paper
    \cite{DBLP:conf/ecrts/LiCALGS14} uses another quantification
    metric, called capacity augmentation factor. It is also shown
    that a capacity augmentation factor $2$ also implies a speedup
    factor $2$ for implicit-deadline task systems.}
\item The speedup factor of the federated scheduling algorithms in
  \cite{DBLP:conf/date/Baruah15,DBLP:conf/emsoft/Baruah15} for
  constrained-deadline task systems in identical multiprocessor
  platforms is $3-1/M$ with respect to \emph{any optimal federated
    scheduling algorithm}.
\item The speedup factor of the federated scheduling algorithms in
  \cite{DBLP:conf/ipps/Baruah15,DBLP:conf/emsoft/Baruah15} for
  arbitrary-deadline task systems in identical multiprocessor
  platforms is $4-2/M$ with respect to \emph{any optimal federated
    scheduling algorithm}.
\end{itemize}


Therefore, there is a potential gap between the \emph{relative}
speedup factors used in
\cite{DBLP:conf/date/Baruah15,DBLP:conf/ipps/Baruah15,DBLP:conf/emsoft/Baruah15}
(with respect to any optimal federated scheduling algorithm) and the
\emph{absolute} speedup factors with respect to any optimal scheduling
algorithm. The results in
\cite{DBLP:conf/date/Baruah15,DBLP:conf/ipps/Baruah15,DBLP:conf/emsoft/Baruah15}
can only be concluded to have a constant speedup factor with respect
to any optimal scheduling algorithm if the federated schedules have a
constant speedup factor with respect to any optimal scheduling
algorithm. It could be possible that federated scheduling itself is
not a good scheduling strategy (with respect to any optimal scheduling
algorithm). If so, the constant federated speedup factors in
\cite{DBLP:conf/date/Baruah15,DBLP:conf/ipps/Baruah15,DBLP:conf/emsoft/Baruah15}
can be misleading and do not result in constant speedup factors with
respect to any optimal scheduling algorithm.

For constrained-deadline task systems with DAG, the contribution of this paper in
Section~\ref{sec:lowerbound} shows that ``the speedup factor of any
federated scheduling algorithm with respect to any optimal scheduling
algorithm is at least $\Omega(\min\{M, N\})$, where $N$ is the number
of tasks and $M$ is the number of processors.'' This concludes that
the speedup factors (with respect to any optimal scheduling
algorithm) of the algorithms in
\cite{DBLP:conf/date/Baruah15,DBLP:conf/ipps/Baruah15,DBLP:conf/emsoft/Baruah15}
are at least $\Omega(\min\{M, N\})$. However, please note that the
result in this paper does not invalidate the constant speedup factors with
respect to any optimal federated scheduling, as claimed in
\cite{DBLP:conf/date/Baruah15,DBLP:conf/ipps/Baruah15,DBLP:conf/emsoft/Baruah15}.

\section{Speedup Factor Lower Bound of Federated Scheduling}
\label{sec:lowerbound}

To prove the lower bound of the speedup factors of any federated
scheduling algorithm with respect to any optimal scheduling
algorithm, we just have to show that there exist input task sets that
admit feasible schedules but cannot be feasibly scheduled by any
federated scheduling strategies under a constant speedup
factor. Specifically, in the provided input task set, it is not
necessary to exploit any specific DAG constraints. The lower bound is
built based on the observation of the pessimistic strategy in
federated scheduling to \emph{grant a task exclusive access to the
  processors upon which they execute if the task needs more than one
  processor}.

Suppose that $M \geq 2 $ is a positive integer. Moreover, let $K$ be any arbitrary number with $K \geq 2$. 
We create $N$ sporadic tasks with the following setting:
\begin{itemize}
\item $C_1=M$, $D_1=1$, and $T_1=\infty$.
\item $C_i=K^{i-2} (K-1)M$, $D_i=K^{i-1}$, and $T_i=\infty$ for $i=2,3,\ldots, N$.
\end{itemize}
Table~\ref{tab:example} provides a concrete example when $N$ is $10$, $M$ is $10$ and $K$ is $2$.

\begin{table}[t]
  \centering
  \begin{tabular}{|c||c|c|c|c|c|c|c|c|c|c|c|}
  \hline
   & $\tau_1$   & $\tau_2$  & $\tau_3$   & $\tau_4$   & $\tau_5$   & $\tau_6$   & $\tau_7$   & $\tau_9$   & $\tau_{10}$\\
   \hline
 $C_i$ & 10 & 10 & 20 & 40 & 80 & 160 & 320 & 640 & 1280\\
\hline
 $D_i$ & 1  & 2   & 4   & 8    & 16 & 32  & 64 & 128 & 256\\
  \hline    
 $T_i$ & \multicolumn{9}{c|}{$\infty$}\\
  \hline    
  \end{tabular}
  \caption{An example of the task set $\tau$ when $N=10$, $M=10$, and $K=2$.}
  \label{tab:example}
\end{table}

We assume that each task $\tau_i$ has $M$ subtasks and there is no
precedence constraint among these $M$ subtasks (a special case of
DAG). Each subtask of task $\tau_i$ has the worst-case execution time
$\frac{C_i}{M}$.  For the rest of this section, we denote this task
set as $\tau_{counter}$.

We will first show in Lemma~\ref{lemma:feasible} that task set $\tau_{counter}$
admits feasible schedules.

\begin{lemma}
  \label{lemma:feasible} There exists a feasible schedule of the given
  task set $\tau_{counter}$.
\end{lemma}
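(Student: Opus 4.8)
The plan is to exhibit an explicit feasible schedule directly, exploiting the fact that $T_i = \infty$ means each task releases exactly one job, all arriving at time $0$, so that only a finite collection of jobs must be placed. Since every subtask of $\tau_i$ is independent (no precedence constraints among the $M$ subtasks) and may be assigned to any processor, the construction will simply partition the time line $[0, D_N]$ into consecutive windows, one per task, and dedicate each window to all $M$ subtasks of the corresponding task, one subtask per processor.

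The key is the following telescoping identity. Set $D_0 := 0$. Then for $i = 1$ we have $D_1 - D_0 = 1 = C_1/M$, and for each $i \geq 2$ we have $D_i - D_{i-1} = K^{i-1} - K^{i-2} = K^{i-2}(K-1) = C_i/M$. Hence for every $i$ the window $(D_{i-1}, D_i]$ has length exactly $C_i/M$, which equals the worst-case execution time of each of the $M$ independent subtasks of $\tau_i$. Because the deadlines $D_i = K^{i-1}$ form a strictly increasing geometric sequence, these windows are pairwise disjoint and their union is $[0, D_N]$.

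With this in hand I would assign, for each task $\tau_i$ and each processor index $p \in \{1, \ldots, M\}$, the $p$-th subtask of $\tau_i$ to processor $p$ and run it throughout the window $(D_{i-1}, D_i]$. Each such subtask receives exactly $C_i/M$ units of processor time, so it completes, and all $M$ subtasks of $\tau_i$ complete, precisely at $D_i$; thus no deadline is missed. At every instant each processor executes at most one subtask, since the windows are disjoint, and the (empty) precedence constraints are trivially respected. Therefore the schedule is feasible.

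The only thing requiring care is the arithmetic verification that $D_i - D_{i-1} = C_i/M$ holds uniformly, including the boundary case $i = 1$ where $C_1 = M$ and $D_1 = 1$; this telescoping is the crux, and once it is established the feasibility is immediate. I do not anticipate a substantive obstacle, since the construction meets every deadline with exact equality and keeps each processor at unit utilization across the whole horizon.
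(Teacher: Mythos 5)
Your construction is essentially the paper's proof: both assign one subtask of each task to each processor and rest on the identity $\sum_{i=1}^{j} C_i/M = D_j$ (your telescoping of the windows $(D_{i-1},D_i]$ is exactly the demand-bound-function check the paper cites), the only difference being that the paper states it as partitioned EDF on each processor while you write out the resulting schedule explicitly. One small caveat: a sporadic task with $T_i=\infty$ releases its single job at an \emph{arbitrary} time rather than necessarily at time $0$, so your fixed windows should be understood as the partitioned-EDF schedule for the synchronous (worst-case) release pattern, which is precisely what the demand-bound-function argument certifies for all legal release times.
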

\begin{proof}
  Since each task $\tau_i$ has $M$ (independent) subtasks with the
  same execution time, we can greedily assign each of them to one
  processor \emph{statically} and apply the earliest-deadline-first
  (EDF) scheduling algorithm individually on each of the $M$
  processors.  Therefore, the feasibility of the schedule can be
  easily verified by validating whether the subtasks on one processor
  can meet the deadline or not.  This can be verified by using the
  demand bound function analysis provided by Baruah et
  al. \cite{DBLP:conf/rtss/BaruahMR90}. Since $\frac{\sum_{i=1}^{j}
    C_i}{M} = D_j$ for $j=1,2,\ldots,N$, the above schedule is a
  feasible one. \myendproof
\end{proof}

The following lemma shows that task set $\tau_{counter}$ cannot be feasibly
scheduled by any federated scheduling algorithm if the speedup factor
$s$ is not big enough. 

\begin{lemma}
  \label{lemma:factor} 
  Suppose that the $M$ processors are speeded up to $s$ times of the
  original speed, where $s$ is strictly smaller than
  $(1-\frac{1}{K})M$, i.e., $s < (1-\frac{1}{K})M$. A federated
  schedule for task set $\tau_{counter}$ requires at least $\frac{M}{s}
  \left(N-\frac{N-1}{K}\right)$ processors with speed $s$ to feasibly
  schedule the given task set $\tau_{counter}$. That is, if $s < (1-\frac{1}{K})M$ and $\frac{M}{s}
  \left(N-\frac{N-1}{K}\right) > M$, then there is no feasible federated schedule for task set $\tau_{counter}$ on $M$ processors at such a speed $s$.
\end{lemma}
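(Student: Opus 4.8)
The plan is to exploit the strict speedup bound $s < (1-\frac{1}{K})M$ to force \emph{every} task into the exclusive-access branch of federated scheduling, after which a DAG-independent load argument lower-bounds the processor demand of each task and the exclusivity makes these demands additive. First I would compute the density $C_i/D_i$ of each task. A direct substitution gives $C_1/D_1 = M$ and, for $i \geq 2$, $C_i/D_i = \frac{K^{i-2}(K-1)M}{K^{i-1}} = (1-\frac{1}{K})M$, so every task satisfies $C_i/D_i \geq (1-\frac{1}{K})M > s$. I would then rule out the sequential branch: if $\tau_i$ ran sequentially on one speed-$s$ processor its completion time would be $C_i/s$, and meeting $D_i$ would require $s \geq C_i/D_i$, which is impossible since $C_i/D_i > s$ for all $i$. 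Hence under any federated schedule at speed $s$ each $\tau_i$ must be granted exclusive access to some cluster of $m_i$ processors.

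Next I would use the defining property of federated scheduling — an exclusive cluster is dedicated to a single task — to conclude that these clusters are pairwise disjoint, so the number of processors consumed is $\sum_{i=1}^{N} m_i$. To bound each $m_i$ from below I would invoke a simple work/load argument that uses none of the DAG structure: during the length-$D_i$ window in which the job must complete, its $m_i$ speed-$s$ processors can perform at most $m_i\, s\, D_i$ units of work, and since the job carries total work $C_i$ we need $m_i\, s\, D_i \geq C_i$, i.e. $m_i \geq \frac{C_i}{s\, D_i} = \frac{1}{s}\cdot\frac{C_i}{D_i}$. Substituting the densities yields $m_1 \geq \frac{M}{s}$ and $m_i \geq \frac{(1-\frac{1}{K})M}{s}$ for $i \geq 2$.

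Summing these bounds over all tasks gives $\sum_{i=1}^{N} m_i \geq \frac{M}{s}\bigl(1 + (N-1)(1-\frac{1}{K})\bigr) = \frac{M}{s}\bigl(N - \frac{N-1}{K}\bigr)$, which is exactly the claimed processor requirement. The final statement then follows immediately: since the $M$ available processors can host at most $M$ processors' worth of disjoint exclusive clusters, the condition $\frac{M}{s}\bigl(N - \frac{N-1}{K}\bigr) > M$ makes the required allocation infeasible, so no feasible federated schedule exists on $M$ processors at speed $s$.

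The only real subtlety, and the step I would treat most carefully, is the second one: arguing cleanly that the \emph{strict} inequality $s < (1-\frac{1}{K})M$ eliminates the sequential option for \emph{every} task, and that the exclusive-access semantics genuinely force the per-task processor demands to add rather than to be shared. Once those two points are pinned down, the load bound $m_i \geq C_i/(s D_i)$ and the closing algebraic simplification are entirely routine.
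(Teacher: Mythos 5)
Your proposal is correct and follows essentially the same route as the paper's own proof: showing $C_i/(D_i s) > 1$ for every task forces exclusive (disjoint) processor grants, lower-bounding each grant by $C_i/(D_i s)$, and summing to obtain $\frac{M}{s}\left(N-\frac{N-1}{K}\right)$. You merely make explicit two steps the paper leaves implicit --- the work-based justification $m_i\, s\, D_i \geq C_i$ and the elimination of the sequential branch --- which is a welcome but not substantively different refinement.
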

\begin{proof}
  If $\frac{C_i}{D_i s} > 1$, the concept of federated scheduling,
  i.e., \emph{tasks that are permitted to execute upon more than one
    processor are granted exclusive access to the processors upon
    which they execute}, would need to execute task $\tau_i$
  exclusively on at least $\ceiling{\frac{C_i}{D_i s}}$ processors  at speed $s$ 
  \emph{exclusively} to serve task $\tau_i$.

  For task $\tau_1$, at least $\ceiling{\frac{C_1}{D_1 s}} >
  \frac{1}{1-\frac{1}{K}} > 1$ processors are needed to ensure the
  feasibility of task $\tau_1$. Moreover, for $i=2,3,\ldots,N$, we
  have $\ceiling{\frac{C_i}{D_i s}} \geq \frac{K^{i-2} (K-1)M}{K^{i-1}
    s} = (1-\frac{1}{K})\frac{M}{s} > 1$.  Therefore, the assumption
  $s < (1-\frac{1}{K})M$ implies that a federated scheduling algorithm
  has to run
  these $N$ tasks exclusively on the granted processors.  So, task
  $\tau_i$ is assigned to be executed on at least $\ceiling{\frac{C_i}{D_i s}}$
  dedicated processors.  Therefore, if $s < (1-\frac{1}{K})M$, the number of processors in
  federated scheduling requires at least
\begin{align*}
\sum_{i=1}^{N} \ceiling{\frac{C_i}{D_i s}} &\geq \frac{M}{s} + \sum_{i=2}^{N} \frac{K^{i-2} (K-1)M}{K^{i-1} s} =\frac{M}{s}+\frac{M}{s} \sum_{i=2}^{N} \left(1-\frac{1}{K}\right)\\
&= \frac{M}{s} \left(N-\frac{N-1}{K}\right).  
\end{align*}  
  Therefore, if $s < (1-\frac{1}{K})M$ and $\frac{M}{s}
  \left(N-\frac{N-1}{K}\right) > M$, then there is no feasible federated schedule for task set $\tau_{counter}$  on $M$ processors at such a speed $s$.
\myendproof
\end{proof}

For the example task set in Table~\ref{tab:example}, suppose that the
speedup factor is $5-\epsilon$ with $\epsilon > 0$. Then, we can
conclude that $\tau_1$ needs at least three processors and each task $\tau_i$ for
$i=2,3,\ldots,10$ needs exclusively at least two processors. Therefore, at least $21$
processors are needed in this example task set under any federated
scheduling with a speedup factor $5-\epsilon$. The lower bound of
Lemma~\ref{lemma:factor} concludes that at least
$\frac{10}{5-\epsilon}(10-\frac{9}{2}) > 11$ processors are needed.
Therefore, the speedup factor of any federated scheduling algorithm with respect to any optimal scheduling algorithm by
considering this concrete example is at least $5$.  We now
conclude the lower bound of the speedup factors of any federated
scheduling algorithms with respect to any optimal scheduling algorithm.

\begin{theorem}
  \label{thm:speedup}
  The speedup factor of any federated scheduling algorithm with
  respect to any optimal scheduling algorithm for
  constrained-deadline task systems with DAG structures is at least
  $\min\left\{\left(1-\frac{1}{K}\right)M,
    \left(N-\frac{N-1}{K}\right)\right\}$.
\end{theorem}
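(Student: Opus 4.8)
The plan is to combine the two lemmas already established, turning the processor-count lower bound of Lemma~\ref{lemma:factor} into a speedup-factor statement. First I would fix the strategy: Lemma~\ref{lemma:feasible} shows $\tau_{counter}$ is feasibly schedulable on $M$ unit-speed processors, so by the definition of speedup factor, any federated algorithm claiming a speedup factor $s$ must be able to feasibly schedule $\tau_{counter}$ on $M$ processors each running at speed $s$. The goal is therefore to find the smallest $s$ that makes a feasible federated schedule on $M$ speed-$s$ processors \emph{possible}; any $s$ below that threshold is ruled out, and that threshold is a lower bound on the speedup factor.

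The core step is a proof by contradiction driven by Lemma~\ref{lemma:factor}. I would assume, for contradiction, that some federated algorithm achieves a speedup factor $s$ with $s < \left(1-\frac{1}{K}\right)M$. Under this hypothesis the first precondition of Lemma~\ref{lemma:factor} is met, so the lemma guarantees that any federated schedule needs at least $\frac{M}{s}\left(N-\frac{N-1}{K}\right)$ speed-$s$ processors. To reach a contradiction I must also verify the second precondition, namely $\frac{M}{s}\left(N-\frac{N-1}{K}\right) > M$: since we also need the hypothesis $s < \left(N-\frac{N-1}{K}\right)$ for this inequality to hold, I would phrase the contradiction hypothesis as $s < \min\left\{\left(1-\frac{1}{K}\right)M, \left(N-\frac{N-1}{K}\right)\right\}$. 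With both inequalities $s < \left(1-\frac{1}{K}\right)M$ and $s < \left(N-\frac{N-1}{K}\right)$ in hand, a one-line rearrangement gives $\frac{M}{s}\left(N-\frac{N-1}{K}\right) > M$, so Lemma~\ref{lemma:factor} forces more than $M$ processors — contradicting that the schedule lives on $M$ processors of speed $s$.

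From this contradiction I conclude that no federated algorithm can have a speedup factor strictly below $\min\left\{\left(1-\frac{1}{K}\right)M, \left(N-\frac{N-1}{K}\right)\right\}$, which is exactly the claimed lower bound. I would close by noting that $K \geq 2$ is a free parameter: taking $K=2$ already yields a bound of $\min\left\{\frac{M}{2}, \frac{N+1}{2}\right\} = \Omega(\min\{M,N\})$, and letting $K$ grow pushes the constant toward $\min\{M,N\}$, which recovers the $\Omega(\min\{M,N\})$ statement advertised in the abstract.

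The main obstacle is not any hard calculation but the careful bookkeeping of which of the two terms in the $\min$ controls which precondition of Lemma~\ref{lemma:factor}. The lemma has two hypotheses — $s < \left(1-\frac{1}{K}\right)M$ (which forces exclusive assignment and the summation bound) and $\frac{M}{s}\left(N-\frac{N-1}{K}\right) > M$ (which forces the processor count to exceed $M$) — and the second is equivalent to $s < \left(N-\frac{N-1}{K}\right)$. I must be precise that it is the \emph{minimum} of these two quantities that delimits the infeasible regime, since assuming only one of the two inequalities would not suffice to invoke the lemma's full conclusion. Getting that logical structure exactly right, rather than any algebra, is where care is needed.
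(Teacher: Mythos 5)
Your proposal is correct and follows essentially the same route as the paper: it combines Lemma~\ref{lemma:feasible} with Lemma~\ref{lemma:factor}, observes that the condition $\frac{M}{s}\left(N-\frac{N-1}{K}\right) > M$ is equivalent to $s < N-\frac{N-1}{K}$, and concludes that any speedup factor below the minimum of the two thresholds is ruled out. The contradiction framing and the closing remark about $K=2$ match the paper's own argument and its subsequent discussion.
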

\begin{proof}
  By Lemma~\ref{lemma:feasible}, task set $\tau_{counter}$
  admits feasible schedules.  By Lemma~\ref{lemma:factor},
 if $s < (1-\frac{1}{K})M$ and $\frac{M}{s}
  \left(N-\frac{N-1}{K}\right) > M$, then there is no feasible federated schedule for task set $\tau_{counter}$  on $M$ processors at such a speed $s$.
This implies that the resulting federated schedule under a speedup
factor $s$ with $s <
  (1-\frac{1}{K})M$ and $s < N-\frac{N-1}{K}$ is not feasible on $M$ processors. Therefore, for the task set $\tau_{counter}$,
  the speedup factor of any federated scheduling must be at least
  $\min\left\{\left(1-\frac{1}{K}\right)M,
    \left(N-\frac{N-1}{K}\right)\right\}$. \myendproof
\end{proof}

When $K$ is $2$, the speedup factor lower bound in
Theorem~\ref{thm:speedup} is at least $\min\{M/2, (N+1)/2\}$.  As a
conclusion, for the task set $\tau_{counter}$, any federated schedule
has a speedup factor at least $\Omega\left(\min\{M, N\}\right)$ with
respect to any optimal scheduling algorithm.

\section{Conclusion and Discussions}

 The result in this paper shows that \emph{at least in
  terms of the speedup metric with respect to any optimal scheduling
  algorithm, federated scheduling strategies do not yield any
  constant speedup factors for constrained-deadline task systems with
  DAG structures.}
This also invalidates the conclusions of the algorithms in
\cite{DBLP:conf/date/Baruah15,DBLP:conf/ipps/Baruah15,DBLP:conf/emsoft/Baruah15}:
\begin{quote}
  Baruah \cite{DBLP:conf/date/Baruah15,DBLP:conf/ipps/Baruah15,DBLP:conf/emsoft/Baruah15}: Our worst-case bounds indicate that at least in terms of the speedup
  metric, there is no loss in going from the three-parameter sporadic
  tasks model to the more general sporadic DAG tasks model.
\end{quote}
That is, the above conclusions in
\cite{DBLP:conf/date/Baruah15,DBLP:conf/ipps/Baruah15,DBLP:conf/emsoft/Baruah15}
stated that the DAG structures (more precisely with the option of
parallel executions) in addition to the traditional sporadic task
model (by using only three parameters $T_i, C_i, D_i$ for task
$\tau_i$ with an assumption $C_i \leq D_i$) do not introduce
additional penalty with respect to the speedup factors.  The statement
is only correct when the reference scheduling algorithm is the optimal
federated scheduling algorithms.  For the traditional sporadic task
model without parallelism, there are scheduling algorithms with a
constant speedup factor $3-1/M$ with respect to any optimal scheduling
algorithm \cite{Chakraborty2011a}.  With the example provided in this
paper, the above statement in
\cite{DBLP:conf/date/Baruah15,DBLP:conf/ipps/Baruah15,DBLP:conf/emsoft/Baruah15}
does not hold when we consider the speedup factors with respect to any
optimal scheduling algorithm.

\bibliography{biblio}{}

\end{document}